\newcommand{\betaqspace}{\hspace{.05em}}
\newcommand{\expectation}[1]{\operatorname{E}[#1]}
\newcommand{\numberoflabels}{\mathsf{M}}
\newcommand{\qxcommaqy}{Q_X\hspace{-.1em},\hspace{.1em}Q_Y}
\newcommand{\ratex}{\mathsf{R_X}}
\newcommand{\ratey}{\mathsf{R_Y}}
\newcommand{\rhotilde}{\tilde{\rho}}
\newcommand{\set}[1]{\mathcal{#1}}
\newtheorem{lemma}{Lemma}
\newtheorem{remark}{Remark}
\newtheorem{theorem}{Theorem}
\begin{document}
\title{Distributed Task Encoding}
\author{\IEEEauthorblockN{Annina Bracher}
\IEEEauthorblockA{Group Risk Management\\
Swiss Re, 8022 Zurich, Switzerland\\
Email: annina.bracher@gmail.com}
\and
\IEEEauthorblockN{Amos Lapidoth and Christoph Pfister}
\IEEEauthorblockA{Signal and Information Processing Laboratory\\
ETH Zurich, 8092 Zurich, Switzerland\\
Email: \{lapidoth,pfister\}@isi.ee.ethz.ch}}

\maketitle

\begin{abstract}
The rate region of the task-encoding problem for two correlated sources is characterized using a novel parametric family of dependence measures.
The converse uses a new expression for the $\rho$-th moment of the list size, which is derived using the relative $\alpha$-entropy.
\end{abstract}

\section{Introduction}
We extend the task-encoding problem introduced by Bunte and Lapidoth \cite{BasicTaskEncoding} to the distributed setting depicted in Figure~\ref{fig_intro_settting}.
A source generates a sequence of pairs $\{(X_i,Y_i)\}_{i=1}^{n}$ over the finite alphabet $\set{X} \times \set{Y}$.
Using the functions
\begin{IEEEeqnarray}{rl}
f_n\colon \set{X}^{n} \to \{&1,\ldots,\lfloor 2^{n \ratex} \rfloor\}, \\
g_n\colon \set{Y}^{n} \to \{&1,\ldots,\lfloor 2^{n \ratey} \rfloor\}, \IEEEeqnarraynumspace
\end{IEEEeqnarray}
the sequence $X^{n}$ is described by one of $\lfloor 2^{n \ratex} \rfloor$ labels and the sequence $Y^{n}$ by one of $\lfloor 2^{n \ratey} \rfloor$ labels.
The decoder outputs the list of all pairs of sequences that could have produced the given pair of labels.
The size of this list is
\begin{IEEEeqnarray}{rl}
L(X^{n},Y^{n}) \triangleq |\{&(x',y') \in \set{X}^{n} \times \set{Y}^{n}: \nonumber \\*
& f_n(x') = f_n(X^{n}) \> \land \> g_n(y') = g_n(Y^{n}) \}|. \IEEEeqnarraynumspace \label{eq_intro_lxy_def}
\end{IEEEeqnarray}
For a fixed $\rho > 0$, a rate pair $(\ratex,\ratey)$ is called achievable if there exists a sequence of task encoders $\{(f_n,g_n)\}_{n=1}^{\infty}$ such that the $\rho$-th moment of the list size tends to one as $n$ tends to infinity, i.e., if
\begin{IEEEeqnarray}{C}
\lim_{n \to \infty} \expectation{L(X^{n},Y^{n})^{\rho}} = 1. \IEEEeqnarraynumspace \label{eq_intro_lim_exp_one}
\end{IEEEeqnarray}
Our main contribution is Theorem~\ref{thm_rate_region}, which states that rate pairs $(\ratex,\ratey)$ in the interior of the following region are achievable, while those outside the region are not:
\begin{IEEEeqnarray}{rCl}
\ratex &\ge& \limsup_{n \to \infty} \frac{H_{\rhotilde}(X^{n})}{n}, \label{eq_intro_dist_region_gen_a} \\
\ratey &\ge& \limsup_{n \to \infty} \frac{H_{\rhotilde}(Y^{n})}{n}, \label{eq_intro_dist_region_gen_b} \\
\ratex + \ratey &\ge& \limsup_{n \to \infty} \frac{H_{\rhotilde}(X^{n}, Y^{n}) + K_{\rhotilde}(X^{n}; Y^{n})}{n}, \IEEEeqnarraynumspace \label{eq_intro_dist_region_gen_c}
\end{IEEEeqnarray}
where $H_{\rhotilde}$ denotes the R\'enyi entropy, $K_{\rhotilde}$ is a dependence measure defined in Section~\ref{sec_definitions}, and throughout the paper
\begin{IEEEeqnarray}{C}
\rhotilde \triangleq \frac{1}{1+\rho}. \IEEEeqnarraynumspace \label{eq_intro_rhotilde}
\end{IEEEeqnarray}
In the IID case, (\ref{eq_intro_dist_region_gen_a})--(\ref{eq_intro_dist_region_gen_c}) reduce to
\begin{IEEEeqnarray}{rCl}
\ratex &\ge& H_{\rhotilde}(P_X), \label{eq_intro_iid_region_a} \\
\ratey &\ge& H_{\rhotilde}(P_Y), \label{eq_intro_iid_region_b} \\
\ratex + \ratey &\ge& H_{\rhotilde}(P_{XY}) + K_{\rhotilde}(X; Y). \IEEEeqnarraynumspace \label{eq_intro_iid_region_c}
\end{IEEEeqnarray}

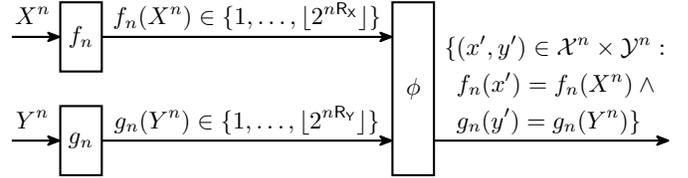
\begin{figure}[!t]
\centering
\begin{tikzpicture}[thick,>={Stealth[round]},scale=0.92,every node/.style={transform shape}]
\draw[->] (-0.85cm, 0.75cm) -- (-0.15cm, 0.75cm);
\draw[->] (-0.85cm,-0.75cm) -- (-0.15cm,-0.75cm);
\node[inner sep=0.4mm,anchor=south] at (-0.55cm, 0.75cm) {$\begin{aligned}[t]\!X^{n}\end{aligned}$};
\node[inner sep=0.4mm,anchor=south] at (-0.55cm,-0.75cm) {$\begin{aligned}[t]\!Y^{n}\end{aligned}$};
\node[draw,inner sep=0,minimum height=1cm,minimum width=0.6cm] at (0.15cm, 0.75cm) {$f_n$};
\node[draw,inner sep=0,minimum height=1cm,minimum width=0.6cm] at (0.15cm,-0.75cm) {$g_n$};
\draw[->] (0.45cm, 0.75cm) -- (4.65cm, 0.75cm);
\draw[->] (0.45cm,-0.75cm) -- (4.65cm,-0.75cm);
\node[inner sep=0.4mm,anchor=south] at (2.55cm, 0.75cm) {$\begin{aligned}[t]\!f_n(X^{n}) \in \{1,\ldots,\lfloor 2^{n \ratex} \rfloor\}\end{aligned}$};
\node[inner sep=0.4mm,anchor=south] at (2.55cm,-0.75cm) {$\begin{aligned}[t]\!g_n(Y^{n}) \in \{1,\ldots,\lfloor 2^{n \ratey} \rfloor\}\end{aligned}$};
\node[draw,inner sep=0,minimum height=2.5cm,minimum width=0.6cm] at (4.95cm,0cm) {$\phi$};
\draw[->] (5.25cm,-0.75cm) -- (8.65cm,-0.75cm);
\node[inner sep=0.4mm,anchor=south] at (7.0cm, -0.75cm) {$\begin{aligned}[t]\!\{&(x',y') \in \set{X}^{n} \times \set{Y}^{n}: \\
&f_n(x') = f_n(X^{n}) \> \land \\
&g_n(y') = g_n(Y^{n})\}\end{aligned}$};
\end{tikzpicture}
\caption{Distributed task encoding with encoders $f_n$ and $g_n$ and a decoder $\phi$.}
\label{fig_intro_settting}
\end{figure}

Compared to Slepian--Wolf coding \cite{CoverAndThomas}, we notice two major differences.
First, the constraints (\ref{eq_intro_iid_region_a}) and (\ref{eq_intro_iid_region_b}) only depend on the marginal PMFs $P_X$ and $P_Y$, so the information that $X$ reveals about $Y$ and vice-versa has no influence on these constraints.
Second, the constraint on the sum rate includes a term $K_{\rhotilde}$, which is not present in the single-source setting \cite[Theorem~I.2]{BasicTaskEncoding}. (The term $K_{\rhotilde}$ is always nonnegative and zero if and only if $X$ and $Y$ are independent \cite[Theorem~2]{TwoMeasuresOfDependence}.)

Task encoding is related to the Massey--Arikan guessing experiment \cite{MasseyGuessing,ArikanGuessing}, where the decoder repeatedly guesses until correct.
While the guessing problem and the task-encoding problem lead to the same asymptotics in the single-source setting \cite{GuessingAndTaskEncoding}, this is no longer the case in the distributed setting \cite{DistributedGuessingAndTaskEncoding}:
except if $X$ and $Y$ are independent, the guessing region from Section~\ref{sec_dist_guessing} is strictly larger than the task-encoding region (\ref{eq_intro_iid_region_a})--(\ref{eq_intro_iid_region_c}).

Another contribution concerns the $\rho$-th moment of the list size in the single-source setting.
Let $\set{X}$ be a finite set of tasks from which a task $X$ is drawn at random according to the PMF $P$ and then mapped to one of $\numberoflabels$ labels by a task encoder $f\colon \set{X} \to \{1,\ldots,\numberoflabels\}$.
Given a task $x$, we denote by
\begin{IEEEeqnarray}{C}
L(x) \triangleq |\{x' \in \set{X}: f(x') = f(x)\}| \IEEEeqnarraynumspace \label{eq_intro_listsize}
\end{IEEEeqnarray}
the size of the list, i.e., the number of tasks that have the same label as $x$.
In Lemma~\ref{lma_single_cost}, we show that the $\rho$-th moment of the list size can be expressed as
\begin{IEEEeqnarray}{C}
\expectation{L(X)^{\rho}} = 2^{\rho [H_{\rhotilde}(P) + \Delta_{\rhotilde}(P||Q) - \log \numberoflabels']}, \IEEEeqnarraynumspace \label{eq_intro_cost_decomp}
\end{IEEEeqnarray}
where $H_{\rhotilde}$ and $\Delta_{\rhotilde}$ are the R\'enyi entropy of order $\rhotilde$ and the relative $\rhotilde$-entropy, respectively, which will be defined in Section~\ref{sec_definitions};
$Q$ is an auxiliary PMF that depends only on the task encoder $f$; and
$\numberoflabels'$ equals the number of used labels.
The analogy between (\ref{eq_intro_cost_decomp}) and a similar expression in classical fixed-to-variable length source coding is discussed at the end of Section~\ref{sec_el_rho_hdm}.

The remainder of this paper is organized as follows.
In Section~\ref{sec_definitions}, we define R\'enyi's information measures and review some of their properties.
In Section~\ref{sec_el_rho_hdm}, we prove (\ref{eq_intro_cost_decomp}) and draw the analogy between (\ref{eq_intro_cost_decomp}) and a similar expression in classical fixed-to-variable length source coding.
In Section~\ref{sec_distributed_te}, we show that (\ref{eq_intro_dist_region_gen_a})--(\ref{eq_intro_dist_region_gen_c}) characterize the region of achievable rate pairs for distributed task encoding.
In Section~\ref{sec_discussion}, we compare (\ref{eq_intro_listsize}) with the related setting where the decoder's list only contains tasks with positive posterior probability.
In Section~\ref{sec_dist_guessing}, we discuss the guessing problem for two correlated sources.

\section{R\'enyi's Information Measures}
\label{sec_definitions}

All logarithms in this paper are to base two.
The R\'enyi entropy of order $\alpha$ was introduced by R\'enyi \cite{RenyiEntropyDivergence} and is defined for $\alpha > 0$ and $\alpha \ne 1$ as
\begin{IEEEeqnarray}{C}
H_\alpha(P) \triangleq \frac{1}{1-\alpha} \log \sum_{x} P(x)^{\alpha}, \IEEEeqnarraynumspace \label{eq_def_renyi_entropy}
\end{IEEEeqnarray}
where $P$ is a PMF.
It is a generalization of Shannon entropy because $\lim_{\alpha \to 1} H_\alpha(P) = H(P)$.
If the PMF of $X$ is $P_X$, we also use $H_\alpha(X)$ to denote $H_\alpha(P_X)$.

The R\'enyi divergence of order $\alpha$ was also introduced by R\'enyi \cite{RenyiEntropyDivergence} and is defined for $\alpha > 0$ and $\alpha \ne 1$ as
\begin{IEEEeqnarray}{C}
D_\alpha(P||Q) \triangleq \frac{1}{\alpha-1} \log \sum_{x} P(x)^{\alpha} Q(x)^{1-\alpha}, \IEEEeqnarraynumspace \label{eq_def_renyi_div}
\end{IEEEeqnarray}
where $P$ and $Q$ are PMFs and where we use the convention that for $\alpha > 1$, we read $P(x)^{\alpha} Q(x)^{1-\alpha}$ as \raisebox{0pt}[0pt][0pt]{$\frac{P(x)^{\alpha}}{Q(x)^{\alpha-1}}$} and say that $\frac{0}{0} = 0$ and $\frac{p}{0} = \infty$ for $p > 0$.
It is a generalization of Kullback--Leibler divergence because $\lim_{\alpha \to 1} D_\alpha(P||Q)$ is equal to $D(P||Q)$.

The relative $\alpha$-entropy was defined by Sundaresan \cite{SundaresanISIT, SundaresanGuessing} for $\alpha > 0$ and $\alpha \ne 1$ as
\begin{IEEEeqnarray}{rCl}
\Delta_\alpha(P||Q) &\triangleq& \frac{\alpha}{1-\alpha} \log \sum_{x} P(x) Q(x)^{\alpha-1} \nonumber \\*
&& +\>\log \sum_{x} Q(x)^{\alpha} - \frac{1}{1-\alpha} \log \sum_{x} P(x)^{\alpha}, \IEEEeqnarraynumspace \label{eq_def_rel_a_div}
\end{IEEEeqnarray}
where $P$ and $Q$ are PMFs and where we use the convention that for $\alpha < 1$, we read $P(x) Q(x)^{\alpha-1}$ as \raisebox{0pt}[0pt][0pt]{$\frac{P(x)}{Q(x)^{1-\alpha}}$} and say that $\frac{0}{0} = 0$ and $\frac{p}{0} = \infty$ for $p > 0$.
It is also a generalization of Kullback--Leibler divergence because $\lim_{\alpha \to 1} \Delta_\alpha(P||Q)$ is equal to $D(P||Q)$.

Relative $\alpha$-entropy and R\'enyi divergence are related as follows \cite[Lemma~1]{TwoMeasuresOfDependence}:
\begin{IEEEeqnarray}{C}
\Delta_\alpha(P||Q) = D_\frac{1}{\alpha}(\hat{P}||\hat{Q}), \IEEEeqnarraynumspace \label{eq_def_delta_to_d}
\end{IEEEeqnarray}
where the transformed PMFs $\hat{P}$ and $\hat{Q}$ are given by
\begin{IEEEeqnarray}{rCl}
\hat{P}(x) &\triangleq& \frac{P(x)^{\alpha}}{\sum_{x'} P(x')^{\alpha}}, \IEEEeqnarraynumspace \label{eq_def_p_hat} \\
\hat{Q}(x) &\triangleq& \frac{Q(x)^{\alpha}}{\sum_{x'} Q(x')^{\alpha}}. \label{eq_def_q_hat}
\end{IEEEeqnarray}
For a fixed $\alpha > 0$, this transformation is bijective on the set of all PMFs because for all $x \in \set{X}$,
\begin{IEEEeqnarray}{C}
Q(x) = \frac{\hat{Q}(x)^{1/\alpha}}{\sum_{x'} \hat{Q}(x')^{1/\alpha}}. \IEEEeqnarraynumspace \label{eq_def_q_hat_inverse}
\end{IEEEeqnarray}

The measure of dependence $K_\alpha(X;Y)$ was introduced in \cite{TwoMeasuresOfDependence} and is defined as
\begin{IEEEeqnarray}{C}
K_\alpha(X;Y) \triangleq \min_{\qxcommaqy} \Delta_\alpha(P_{XY}||Q_X Q_Y), \IEEEeqnarraynumspace \label{eq_def_k_alpha}
\end{IEEEeqnarray}
where $P_{XY}$ is the joint PMF of $X$ and $Y$ and the minimization is over all PMFs $Q_X$ and $Q_Y$.
It is a generalization of the mutual information because $\lim_{\alpha \to 1} K_\alpha(X;Y) = I(X;Y)$.

Recalling (\ref{eq_intro_rhotilde}) and (\ref{eq_def_renyi_entropy}), we obtain from (\ref{eq_def_rel_a_div})
\begin{IEEEeqnarray}{rCl}
\Delta_{\rhotilde}(P||Q) &=& \frac{1}{\rho} \log \sum_{x} P(x) Q(x)^{-\rho \rhotilde} \nonumber \\*
&& +\>\log \sum_{x} Q(x)^{\rhotilde} - H_{\rhotilde}(P). \IEEEeqnarraynumspace \label{eq_def_delta_rhotilde}
\end{IEEEeqnarray}

\section{Moments of the List Size}
\label{sec_el_rho_hdm}

\begin{lemma}
\label{lma_single_cost}
Let $P$ be a PMF on the finite set $\set{X}$, let $\numberoflabels$ be a positive integer, let $f$ be a function from $\set{X}$ to $\{1,\ldots,\numberoflabels\}$, let~$L$ be defined as in (\ref{eq_intro_listsize}), and let $\rho > 0$.
Define the PMF $Q$ as
\begin{IEEEeqnarray}{C}
Q(x) \triangleq \frac{L(x)^{-(1+\rho)}}{\sum_{x'} L(x')^{-(1+\rho)}}. \IEEEeqnarraynumspace \label{eq_lma_single_qdef}
\end{IEEEeqnarray}
If $X$ is distributed according to $P$, then
\begin{IEEEeqnarray}{C}
\expectation{L(X)^{\rho}} = 2^{\rho [H_{\rhotilde}(P) + \Delta_{\rhotilde}(P||Q) - \log \numberoflabels']}, \IEEEeqnarraynumspace \label{eq_lma_single_explr}
\end{IEEEeqnarray}
where $\numberoflabels'$ denotes the number of labels that are actually used (as opposed to allowed), i.e.,
\begin{IEEEeqnarray}{C}
\numberoflabels' \triangleq |\{f(x): x \in \set{X} \}| \le \numberoflabels. \IEEEeqnarraynumspace
\end{IEEEeqnarray}
\end{lemma}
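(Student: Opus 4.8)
The plan is to compute $\expectation{L(X)^\rho}$ directly and match it term by term with the right-hand side of \eqref{eq_lma_single_explr}, using the formula \eqref{eq_def_delta_rhotilde} for $\Delta_{\rhotilde}(P\|Q)$. First I would write $\expectation{L(X)^\rho} = \sum_x P(x) L(x)^\rho$ and observe that, since $L(x)$ depends on $x$ only through the label $f(x)$, this sum can be reorganized by labels; but it turns out to be cleaner to keep it in the form $\sum_x P(x) L(x)^\rho$ and take $\log$ of both sides. So the real task is to show
\begin{IEEEeqnarray}{C}
\frac{1}{\rho} \log \sum_x P(x) L(x)^\rho = H_{\rhotilde}(P) + \Delta_{\rhotilde}(P\|Q) - \log \numberoflabels'. \nonumber
\end{IEEEeqnarray}

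Next I would substitute the definition \eqref{eq_lma_single_qdef} of $Q$ into \eqref{eq_def_delta_rhotilde}. Write $S \triangleq \sum_{x'} L(x')^{-(1+\rho)}$, so that $Q(x) = L(x)^{-(1+\rho)}/S$ and hence $Q(x)^{-\rho\rhotilde} = Q(x)^{-\rho/(1+\rho)} = L(x)^{\rho} S^{\rho/(1+\rho)}$, while $Q(x)^{\rhotilde} = Q(x)^{1/(1+\rho)} = L(x)^{-1} S^{-1/(1+\rho)}$. Plugging these into \eqref{eq_def_delta_rhotilde}, the $H_{\rhotilde}(P)$ terms cancel against the one I want to keep, and I am left with
\begin{IEEEeqnarray}{C}
H_{\rhotilde}(P) + \Delta_{\rhotilde}(P\|Q) = \frac{1}{\rho}\log\!\Bigl(\sum_x P(x) L(x)^\rho\Bigr) + \frac{1}{\rho}\log S^{\rho/(1+\rho)} + \log\!\Bigl(\sum_x L(x)^{-1} S^{-1/(1+\rho)}\Bigr). \nonumber
\end{IEEEeqnarray}
The two $S$-dependent contributions combine to $\frac{1}{1+\rho}\log S - \frac{1}{1+\rho}\log S = 0$, leaving a clean $\frac{1}{\rho}\log\sum_x P(x) L(x)^\rho + \log\sum_x L(x)^{-1}$.

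The only remaining point is the combinatorial identity $\sum_x L(x)^{-1} = \numberoflabels'$. This is where the structure of $L$ enters: partitioning $\set{X}$ according to the value of $f$, each nonempty preimage $f^{-1}(j)$ has exactly $L(x)$ elements for every $x$ in it, so $\sum_{x \in f^{-1}(j)} L(x)^{-1} = 1$, and summing over the $\numberoflabels'$ used labels gives $\numberoflabels'$. Hence $\log\sum_x L(x)^{-1} = \log\numberoflabels'$, and moving it to the other side yields exactly the claimed identity; exponentiating (base two, times $\rho$) gives \eqref{eq_lma_single_explr}. I expect no genuine obstacle here — the main thing to be careful about is bookkeeping with the exponents $\rhotilde = 1/(1+\rho)$ and $-\rho\rhotilde = -\rho/(1+\rho)$ and making sure the normalization constant $S$ cancels, together with checking that $Q$ as defined is indeed a PMF (which is immediate) and that all quantities are finite since $L(x) \ge 1$ for all $x$.
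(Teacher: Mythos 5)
Your proposal is correct and uses the same two ingredients as the paper's proof: the algebraic relation between $Q$ and $L$ induced by \eqref{eq_lma_single_qdef} (the paper packages it as $L(x)=\beta Q(x)^{-\rhotilde}$ with $\beta=S^{-\rhotilde}$, you carry $S$ explicitly and watch it cancel), and the combinatorial identity $\sum_x L(x)^{-1}=\numberoflabels'$. The difference is purely presentational — the paper works forward from $\expectation{L(X)^\rho}=\beta^\rho\sum_x P(x)Q(x)^{-\rho\rhotilde}$ while you take logs and verify the target identity by substitution — so this is essentially the paper's argument.
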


\begin{proof}
Since $L(x) \ge 1$ for all $x \in \set{X}$, $Q$ is well-defined and indeed a PMF.
Rearranging (\ref{eq_lma_single_qdef}), we get
\begin{IEEEeqnarray}{C}
L(x) = \beta \betaqspace Q(x)^{-\rhotilde} \IEEEeqnarraynumspace \label{eq_lma_single_cost_lx}
\end{IEEEeqnarray}
for some positive $\beta$.
Let $\set{M}' \triangleq \{f(x): x \in \set{X} \}$ be the set of labels that are used, and observe that
\begin{IEEEeqnarray}{rCl}
\numberoflabels' &=& \sum_{m \in \set{M}'} 1 \label{eq_lma_single_cost_a} \\
&=& \sum_{m \in \set{M}'} \sum_{x: f(x)=m} L(x)^{-1} \label{eq_lma_single_cost_b} \\
&=& \sum_{m \in \set{M}'} \sum_{x: f(x)=m} \beta^{-1} Q(x)^{\rhotilde} \IEEEeqnarraynumspace \label{eq_lma_single_cost_c}\\
&=& \beta^{-1} \sum_{x} Q(x)^{\rhotilde}, \label{eq_lma_single_cost_d}
\end{IEEEeqnarray}
where (\ref{eq_lma_single_cost_a}) holds because $\numberoflabels' = |\set{M}'|$;
(\ref{eq_lma_single_cost_b}) holds because for all $x \in \set{X}$ with $f(x)=m$, $L(x) = |\{x' \in \set{X}: f(x') = m\}|$;
(\ref{eq_lma_single_cost_c}) follows from (\ref{eq_lma_single_cost_lx}); and
(\ref{eq_lma_single_cost_d}) holds because each $x$ appears exactly once on the RHS of (\ref{eq_lma_single_cost_c}).
Consequently, $\beta$ can be expressed as
\begin{IEEEeqnarray}{C}
\beta = \frac{1}{\numberoflabels'} \sum_{x} Q(x)^{\rhotilde}, \IEEEeqnarraynumspace \label{eq_lma_single_cost_beta}
\end{IEEEeqnarray}
and
\begin{IEEEeqnarray}{rCl}
\IEEEeqnarraymulticol{3}{l}{\expectation{L(X)^{\rho}}} \nonumber \\* \quad
&=& \sum_{x} P(x) L(x)^{\rho} \label{eq_lma_single_cost_g} \\
&=& \beta^{\rho} \sum_{x} P(x) Q(x)^{-\rho \rhotilde} \label{eq_lma_single_cost_h} \\
&=& 2^{\rho [-\log \numberoflabels' + \log \sum_{x} Q(x)^{\rhotilde} + \frac{1}{\rho} \log \sum_{x} P(x) Q(x)^{-\rho \rhotilde}]} \IEEEeqnarraynumspace \label{eq_lma_single_cost_i} \\
&=& 2^{\rho [H_{\rhotilde}(P) + \Delta_{\rhotilde}(P||Q) - \log \numberoflabels']}, \label{eq_lma_single_cost_j}
\end{IEEEeqnarray}
where (\ref{eq_lma_single_cost_h}) follows from (\ref{eq_lma_single_cost_lx});
(\ref{eq_lma_single_cost_i}) follows from (\ref{eq_lma_single_cost_beta}); and
(\ref{eq_lma_single_cost_j}) follows from (\ref{eq_def_delta_rhotilde}).
\end{proof}

\begin{remark}
For every binary fixed-to-variable length source code, we have \cite[(5.25)]{CoverAndThomas}
\begin{IEEEeqnarray}{C}
\expectation{L'(X)} = H(P) + D(P||Q) + \log \frac{1}{\alpha}, \IEEEeqnarraynumspace
\end{IEEEeqnarray}
where $L'(x)$ is the length of the codeword for symbol $x$;
$P$ is the PMF of the source;
$\alpha$ is defined as $\sum_{x} 2^{-L'(x)}$; and
the PMF $Q$ is given by $Q(x) \triangleq \frac{1}{\alpha} 2^{-L'(x)}$.
The expected codeword length is thus determined by three terms:
an entropy term that depends only on the source;
a divergence term that measures how well the code is matched to the source;
and an inefficiency term that depends only on the code.
(For uniquely decodable codes, $\alpha \le 1$ by Kraft's inequality.)

We have the same structure in (\ref{eq_lma_single_explr}):
an entropy term that depends only on the source;
a divergence term that measures how well the code is matched to the source;
and an inefficiency term that depends only on the code ($\numberoflabels' \le \numberoflabels$ must hold by definition).
\end{remark}

\section{Distributed Task Encoding}
\label{sec_distributed_te}

\begin{theorem}
\label{thm_rate_region}
Recalling the definition of an achievable rate pair from the introduction,
rate pairs $(\ratex,\ratey)$ in the interior of the following region are achievable, while those outside the region are not:
\begin{IEEEeqnarray}{rCl}
\ratex &\ge& \limsup_{n \to \infty} \frac{H_{\rhotilde}(X^{n})}{n}, \label{eq_dte_rateregion_a} \\
\ratey &\ge& \limsup_{n \to \infty} \frac{H_{\rhotilde}(Y^{n})}{n}, \label{eq_dte_rateregion_b} \\
\ratex + \ratey &\ge& \limsup_{n \to \infty} \frac{H_{\rhotilde}(X^{n}, Y^{n}) + K_{\rhotilde}(X^{n}; Y^{n})}{n}. \IEEEeqnarraynumspace \label{eq_dte_rateregion_c}
\end{IEEEeqnarray}
If $\{(X_i,Y_i)\}_{i=1}^{\infty}$ are IID $P_{XY}$, the region (\ref{eq_dte_rateregion_a})--(\ref{eq_dte_rateregion_c}) reduces to
\begin{IEEEeqnarray}{rCl}
\ratex &\ge& H_{\rhotilde}(P_X), \label{eq_dte_iid_rateregion_a} \\
\ratey &\ge& H_{\rhotilde}(P_Y), \label{eq_dte_iid_rateregion_b} \\
\ratex + \ratey &\ge& H_{\rhotilde}(P_{XY}) + K_{\rhotilde}(X; Y). \IEEEeqnarraynumspace \label{eq_dte_iid_rateregion_c}
\end{IEEEeqnarray}
\end{theorem}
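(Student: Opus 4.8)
The plan is to prove the theorem in three steps: a converse (points outside the region are not achievable), a direct part (points in the interior are achievable), and the specialisation to the IID case. The converse will follow almost directly from Lemma~\ref{lma_single_cost}; the direct part needs a code construction adapted to $K_{\rhotilde}$; and the IID case is a matter of additivity.

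\emph{Converse.} Writing $L_{f_n}$ and $L_{g_n}$ for the single-source list sizes of $f_n$ and $g_n$ as defined in (\ref{eq_intro_listsize}), I would first observe that the decoder's list is always a Cartesian product, so $L(x^{n},y^{n}) = L_{f_n}(x^{n})\, L_{g_n}(y^{n})$ for every pair of encoders. In particular $L(X^{n},Y^{n}) \ge L_{f_n}(X^{n})$, and applying Lemma~\ref{lma_single_cost} to the source $X^{n}$ and the encoder $f_n$ --- whose number of used labels is at most $\lfloor 2^{n\ratex}\rfloor$ --- together with $\Delta_{\rhotilde} \ge 0$ gives $\expectation{L(X^{n},Y^{n})^{\rho}} \ge 2^{\rho[H_{\rhotilde}(X^{n}) - n\ratex]}$; taking logarithms, dividing by $n\rho$, letting $n \to \infty$ along an achievable sequence (so that $\tfrac{1}{n\rho}\log \expectation{L(X^{n},Y^{n})^{\rho}} \to 0$), and taking the $\limsup$ yields (\ref{eq_dte_rateregion_a}). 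The bound (\ref{eq_dte_rateregion_b}) follows symmetrically from $L(X^{n},Y^{n}) \ge L_{g_n}(Y^{n})$. For the sum-rate bound I would apply Lemma~\ref{lma_single_cost} to the joint source $(X^{n},Y^{n})$ and the joint encoder $(x^{n},y^{n}) \mapsto (f_n(x^{n}),g_n(y^{n}))$: its list size equals $L(X^{n},Y^{n})$, its number of used labels is at most $\lfloor 2^{n\ratex}\rfloor \lfloor 2^{n\ratey}\rfloor$, and --- the crucial point --- the auxiliary PMF in (\ref{eq_lma_single_qdef}) factorises, $Q(x^{n},y^{n}) \propto L_{f_n}(x^{n})^{-(1+\rho)} L_{g_n}(y^{n})^{-(1+\rho)}$, so by the definition (\ref{eq_def_k_alpha}) of $K_{\rhotilde}$ we have $\Delta_{\rhotilde}(P_{X^{n}Y^{n}} \| Q) \ge K_{\rhotilde}(X^{n};Y^{n})$ and hence $\expectation{L(X^{n},Y^{n})^{\rho}} \ge 2^{\rho[H_{\rhotilde}(X^{n},Y^{n}) + K_{\rhotilde}(X^{n};Y^{n}) - n(\ratex+\ratey)]}$. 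The same limiting argument gives (\ref{eq_dte_rateregion_c}). Since a point outside the region violates at least one of (\ref{eq_dte_rateregion_a})--(\ref{eq_dte_rateregion_c}), it is not achievable.

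\emph{Direct part.} Fix $(\ratex,\ratey)$ strictly inside the region. For each $n$ I would pick product PMFs $Q_X^{(n)}$ on $\set{X}^{n}$ and $Q_Y^{(n)}$ on $\set{Y}^{n}$ (to be chosen) and, following the single-source construction of \cite{BasicTaskEncoding}, build encoders $f_n,g_n$ with $L_{f_n}(x^{n}) \le \max(1,\,c\,\beta_X \betaqspace Q_X^{(n)}(x^{n})^{-\rhotilde})$ using at most $\lfloor 2^{n\ratex}\rfloor$ labels, where $c$ is an absolute constant and $\beta_X$ is of order $(\sum_{x^{n}} Q_X^{(n)}(x^{n})^{\rhotilde})/2^{n\ratex}$ (the bound on the label count being the Kraft-type inequality $\sum_{x^{n}} \max(1,c\beta_X Q_X^{(n)}(x^{n})^{-\rhotilde})^{-1} \le \beta_X^{-1}\sum_{x^{n}} Q_X^{(n)}(x^{n})^{\rhotilde}$), and symmetrically for $g_n$. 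Then $L = L_{f_n}L_{g_n}$, and raising the bounds to the power $\rho$ (using $\max(1,a)^{\rho} \le 1 + a^{\rho}$) and expanding the product gives $\expectation{L(X^{n},Y^{n})^{\rho}} \le 1 + T_X^{(n)} + T_Y^{(n)} + T_{XY}^{(n)}$, where by (\ref{eq_def_delta_rhotilde})
\begin{IEEEeqnarray}{rCl}
\tfrac{1}{\rho}\log T_X^{(n)} &\le& H_{\rhotilde}(X^{n}) + \Delta_{\rhotilde}(P_{X^{n}} \| Q_X^{(n)}) - n\ratex + O(1), \nonumber \\
\tfrac{1}{\rho}\log T_Y^{(n)} &\le& H_{\rhotilde}(Y^{n}) + \Delta_{\rhotilde}(P_{Y^{n}} \| Q_Y^{(n)}) - n\ratey + O(1), \nonumber \\
\tfrac{1}{\rho}\log T_{XY}^{(n)} &\le& H_{\rhotilde}(X^{n},Y^{n}) + \Delta_{\rhotilde}(P_{X^{n}Y^{n}} \| Q_X^{(n)}Q_Y^{(n)}) - n(\ratex+\ratey) + O(1). \nonumber
\end{IEEEeqnarray}
Thus it suffices to choose the $Q^{(n)}$'s so that all three exponents tend to $-\infty$ --- i.e., so that $\limsup_{n} \tfrac{1}{n}[H_{\rhotilde}(X^{n}) + \Delta_{\rhotilde}(P_{X^{n}} \| Q_X^{(n)})] < \ratex$, the analogous inequality for $Y$, and $\limsup_{n} \tfrac{1}{n}[H_{\rhotilde}(X^{n},Y^{n}) + \Delta_{\rhotilde}(P_{X^{n}Y^{n}} \| Q_X^{(n)}Q_Y^{(n)})] < \ratex + \ratey$ --- since then $\expectation{L(X^{n},Y^{n})^{\rho}} \to 1$ (recall $L \ge 1$ forces this expectation to be at least one).

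\emph{The obstacle, and the IID case.} The feasibility of this choice of $Q^{(n)}$'s is, I expect, the heart of the matter. Taking $Q_X^{(n)} = P_{X^{n}}$ and $Q_Y^{(n)} = P_{Y^{n}}$ makes the relative-$\rhotilde$-entropy terms vanish, so the first two conditions reduce exactly to (\ref{eq_dte_rateregion_a})--(\ref{eq_dte_rateregion_b}), but this choice is in general too weak for the third, since $\Delta_{\rhotilde}(P_{X^{n}Y^{n}} \| P_{X^{n}}P_{Y^{n}})$ may strictly exceed $K_{\rhotilde}(X^{n};Y^{n})$; taking $Q_X^{(n)}, Q_Y^{(n)}$ to be the minimisers in (\ref{eq_def_k_alpha}) makes the third condition reduce exactly to (\ref{eq_dte_rateregion_c}) but may violate the first two. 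One therefore needs to interpolate between these extremes --- in effect a Pythagorean-type decomposition of $\Delta_{\rhotilde}(P_{X^{n}Y^{n}} \| Q_XQ_Y)$ around the $K_{\rhotilde}$-minimising product PMF --- converting the strict slack in (\ref{eq_dte_rateregion_a})--(\ref{eq_dte_rateregion_c}) into the simultaneous feasibility of all three conditions; this is where the parametric family of dependence measures announced in the abstract should do the work. Finally, in the IID case $H_{\rhotilde}$ is additive over product PMFs, so $H_{\rhotilde}(X^{n}) = nH_{\rhotilde}(P_X)$, $H_{\rhotilde}(Y^{n}) = nH_{\rhotilde}(P_Y)$, $H_{\rhotilde}(X^{n},Y^{n}) = nH_{\rhotilde}(P_{XY})$, and $K_{\rhotilde}$ is likewise additive, $K_{\rhotilde}(X^{n};Y^{n}) = nK_{\rhotilde}(X;Y)$ \cite{TwoMeasuresOfDependence}; dividing by $n$ and taking the now-trivial $\limsup$ turns (\ref{eq_dte_rateregion_a})--(\ref{eq_dte_rateregion_c}) into (\ref{eq_dte_iid_rateregion_a})--(\ref{eq_dte_iid_rateregion_c}).
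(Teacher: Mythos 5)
Your converse and your IID specialization are correct and follow the paper's own argument essentially step for step: the Cartesian-product structure of the list, Lemma~\ref{lma_single_cost} applied once per marginal and once to the joint encoder $(x^{n},y^{n})\mapsto(f_n(x^{n}),g_n(y^{n}))$, and the observation that the auxiliary PMF of the joint encoder factorizes, so that $\Delta_{\rhotilde}(P_{X^{n}Y^{n}}\|Q_{X^{n}}Q_{Y^{n}})\ge K_{\rhotilde}(X^{n};Y^{n})$ by the definition (\ref{eq_def_k_alpha}).

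The direct part, however, has a genuine gap, and it sits exactly where you say it does: you never exhibit auxiliary PMFs making all three exponents go to $-\infty$ simultaneously, and the ``Pythagorean-type decomposition'' you gesture at is not what closes it. The paper's resolution is a mixture argument carried out in the \emph{transformed} domain. Using the identity (\ref{eq_def_delta_to_d}), $\Delta_{\rhotilde}(P\|Q)=D_{1+\rho}(\hat{P}\|\hat{Q})$ with $\hat{Q}$ as in (\ref{eq_def_q_hat}), one sets $\hat{Q}_{X^{n}}=\tfrac12\hat{P}_{X^{n}}+\tfrac12\hat{Q}_{X^{n}}^{*}$ (and similarly for $Y$), where $Q_{X^{n}}^{*}Q_{Y^{n}}^{*}$ attains the minimum in (\ref{eq_def_k_alpha}), and then recovers $Q_{X^{n}}$ via the inverse map (\ref{eq_def_q_hat_inverse}). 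Because the second argument of $D_{1+\rho}$ enters with a negative power, lower-bounding the mixture pointwise by half of either component gives $\Delta_{\rhotilde}(P_{X^{n}}\|Q_{X^{n}})\le 1$ and $\Delta_{\rhotilde}(P_{Y^{n}}\|Q_{Y^{n}})\le 1$ (using the $\hat{P}$ component), and, since the hat-transform of a product is the product of the transforms, $\Delta_{\rhotilde}(P_{X^{n}Y^{n}}\|Q_{X^{n}}Q_{Y^{n}})\le K_{\rhotilde}(X^{n};Y^{n})+2$ (using the $\hat{Q}^{*}$ components). These additive constants are killed by the $\tfrac1n$ normalization, so the strict slack in (\ref{eq_dte_rateregion_a})--(\ref{eq_dte_rateregion_c}) suffices. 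Note also that the resulting $Q_{X^{n}}$ is typically \emph{not} a product distribution over the $n$ coordinates even in the IID case, so restricting to memoryless auxiliary PMFs, as your phrasing suggests, would be a further loss. Everything else in your direct part --- the construction via \cite[Proposition~III.2]{BasicTaskEncoding}, the $1+T_X^{(n)}+T_Y^{(n)}+T_{XY}^{(n)}$ expansion, and the reduction to the three exponent conditions --- matches the paper.
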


\begin{proof}
In both the proof of the converse and the direct part, we use the fact that the set on the RHS of (\ref{eq_intro_lxy_def}) is a Cartesian product, so
\begin{IEEEeqnarray}{C}
L(x^{n},y^{n}) = L_X(x^{n}) L_Y(y^{n}) \IEEEeqnarraynumspace \label{eq_dte_lxy_lx_ly}
\end{IEEEeqnarray}
for all $x^{n} \in \set{X}^{n}$ and $y^{n} \in \set{Y}^{n}$, where
\begin{IEEEeqnarray}{rCl}
L_X(x^{n}) &\triangleq& |\{x' \in \set{X}^{n}: f_n(x') = f_n(x^{n})\}|, \IEEEeqnarraynumspace \\
L_Y(y^{n}) &\triangleq& |\{y' \in \set{Y}^{n}: g_n(y') = g_n(y^{n})\}|.
\end{IEEEeqnarray}

We begin with the converse, i.e., with showing that if a rate pair $(\ratex,\ratey)$ is achievable, then (\ref{eq_dte_rateregion_a})--(\ref{eq_dte_rateregion_c}) must be satisfied.
Observe that
\begin{IEEEeqnarray}{rCl}
\expectation{L(X^{n},Y^{n})^{\rho}} &=& \expectation{L_X(X^{n})^{\rho} L_Y(Y^{n})^{\rho}} \label{eq_dte_rx_a} \\
&\ge& \expectation{L_X(X^{n})^{\rho}} \label{eq_dte_rx_b} \\
&=& 2^{\rho [H_{\rhotilde}(X^{n}) + \Delta_{\rhotilde}(P_{X^{n}}||Q) - \log \numberoflabels']} \IEEEeqnarraynumspace \label{eq_dte_rx_c} \\
&\ge& 2^{\rho n [\frac{1}{n} H_{\rhotilde}(X^{n}) - \ratex]}, \label{eq_dte_rx_d}
\end{IEEEeqnarray}
where (\ref{eq_dte_rx_a}) follows from (\ref{eq_dte_lxy_lx_ly});
(\ref{eq_dte_rx_b}) holds because $L_Y(y^{n}) \ge 1$ for all $y^{n} \in \set{Y}^{n}$;
(\ref{eq_dte_rx_c}) follows from Lemma~\ref{lma_single_cost} applied with the function $f_n\colon \set{X}^{n} \to \{1,\ldots,\lfloor 2^{n \ratex} \rfloor\}$ and the PMF $P_{X^{n}}$; and
(\ref{eq_dte_rx_d}) holds because $\Delta_{\rhotilde}$ is nonnegative \cite{SundaresanGuessing} and because $\numberoflabels' \le \lfloor 2^{n \ratex} \rfloor \le 2^{n \ratex}$.
If (\ref{eq_dte_rateregion_a}) is not satisfied, then there exists a $\gamma > 0$ such that
\begin{IEEEeqnarray}{C}
\frac{1}{n} H_{\rhotilde}(X^{n}) - \ratex \ge \gamma \IEEEeqnarraynumspace \label{eq_dte_ge_gamma}
\end{IEEEeqnarray}
holds for infinitely many values of $n$.
In that case, (\ref{eq_dte_rx_d}) implies that $\limsup_{n \to \infty} \expectation{L(X^{n},Y^{n})^{\rho}} = \infty$, which precludes the possibility that $\lim_{n \to \infty} \expectation{L(X^{n},Y^{n})^{\rho}} = 1$.
Thus, (\ref{eq_dte_rateregion_a}) is necessary for the rate pair $(\ratex,\ratey)$ to be achievable.
The necessity of (\ref{eq_dte_rateregion_b}) follows by swapping the role of $X$ and $Y$ in the above proof.
To see that (\ref{eq_dte_rateregion_c}) is necessary, introduce the PMFs
\begin{IEEEeqnarray}{rCl}
Q_{X^{n}}(x^{n}) &\triangleq& \frac{L_X(x^{n})^{-(1+\rho)}}{\sum_{x' \in \set{X}^{n}} L_X(x')^{-(1+\rho)}}, \IEEEeqnarraynumspace \\
Q_{Y^{n}}(y^{n}) &\triangleq& \frac{L_Y(y^{n})^{-(1+\rho)}}{\sum_{y' \in \set{Y}^{n}} L_Y(y')^{-(1+\rho)}},
\end{IEEEeqnarray}
and observe that
\begin{IEEEeqnarray}{rCl}
\IEEEeqnarraymulticol{3}{l}{\expectation{L(X^{n},Y^{n})^{\rho}}} \nonumber \\* \quad
&=& 2^{\rho [H_{\rhotilde}(X^{n}, Y^{n}) + \Delta_{\rhotilde}(P_{X^{n} Y^{n}}||Q) - \log \numberoflabels']} \label{eq_dte_rxry_a} \\
&=& 2^{\rho [H_{\rhotilde}(X^{n}, Y^{n}) + \Delta_{\rhotilde}(P_{X^{n} Y^{n}}||Q_{X^{n}} Q_{Y^{n}}) - \log \numberoflabels']} \IEEEeqnarraynumspace \label{eq_dte_rxry_b} \\
&\ge& 2^{\rho n [\frac{1}{n} H_{\rhotilde}(X^{n}, Y^{n}) + \frac{1}{n} K_{\rhotilde}(X^{n}; Y^{n}) - (\ratex + \ratey)]}, \label{eq_dte_rxry_c}
\end{IEEEeqnarray}
where (\ref{eq_dte_rxry_a}) follows from Lemma~\ref{lma_single_cost} by viewing the distributed task encoder as a function that maps pairs $(x^{n},y^{n})$ to one of $\lfloor 2^{n \ratex} \rfloor \cdot \lfloor 2^{n \ratey} \rfloor$ labels;
(\ref{eq_dte_rxry_b}) holds because plugging (\ref{eq_dte_lxy_lx_ly}) into (\ref{eq_lma_single_qdef}) leads to $Q(x^{n}, y^{n}) = Q_{X^{n}}(x^{n}) Q_{Y^{n}}(y^{n})$ for all $x^{n} \in \set{X}^{n}$ and $y^{n} \in \set{Y}^{n}$; and
(\ref{eq_dte_rxry_c}) holds since the definition (\ref{eq_def_k_alpha}) implies $K_{\rhotilde}(X^{n}; Y^{n}) \le \Delta_{\rhotilde}(P_{X^{n} Y^{n}}||Q_{X^{n}} Q_{Y^{n}})$ and because we have
$\numberoflabels' \le \lfloor 2^{n \ratex} \rfloor \cdot \lfloor 2^{n \ratey} \rfloor \le 2^{n (\ratex + \ratey)}$.
By the same argument as in (\ref{eq_dte_ge_gamma}), (\ref{eq_dte_rxry_c}) implies that (\ref{eq_dte_rateregion_c}) is necessary for the achievability of the rate pair $(\ratex,\ratey)$.

We next turn to the direct part and show that a rate pair $(\ratex,\ratey)$ is achievable whenever (\ref{eq_dte_rateregion_a})--(\ref{eq_dte_rateregion_c}) all hold with strict inequalities.
We first use the methods from \cite[Section~III-B]{BasicTaskEncoding} to obtain task encoders $f_n$ and $g_n$ that are based on auxiliary PMFs $Q_{X^{n}}$ and $Q_{Y^{n}}$, respectively, and we give bounds on the $\rho$-th moment of the list size.
We then show how to choose $Q_{X^{n}}$ and $Q_{Y^{n}}$ to ensure that (\ref{eq_intro_lim_exp_one}) is satisfied.
Throughout the proof of the direct part, we assume
\begin{IEEEeqnarray}{rCl}
\lfloor 2^{n \ratex} \rfloor - n \log |\set{X}| - 2 &>& 0, \label{eq_dte_n_big_enough_a} \\
\lfloor 2^{n \ratey} \rfloor - n \log |\set{Y}| - 2 &>& 0. \IEEEeqnarraynumspace \label{eq_dte_n_big_enough_b}
\end{IEEEeqnarray}
This entails no loss of generality since we are only interested in the large-$n$ asymptotic performance of our scheme, and because $\ratex$ and $\ratey$ are positive, there exists some $n_0$ such that (\ref{eq_dte_n_big_enough_a}) and (\ref{eq_dte_n_big_enough_b}) hold for all $n \ge n_0$.

Using \cite[Proposition~III.2]{BasicTaskEncoding} twice, we obtain task encoders $f_n\colon \set{X}^{n} \to \{1,\ldots,\lfloor 2^{n \ratex} \rfloor\}$ and $g_n\colon \set{Y}^{n} \to \{1,\ldots,\lfloor 2^{n \ratey} \rfloor\}$
satisfying
\begin{IEEEeqnarray}{rCl}
L_X(x^{n}) &\le& \lambda_X(x^{n}), \IEEEeqnarraynumspace \label{eq_dte_lx_le_lambdax} \\
L_Y(y^{n}) &\le& \lambda_Y(y^{n}) \label{eq_dte_ly_le_lambday}
\end{IEEEeqnarray}
for all $x^{n} \in \set{X}^{n}$ and $y^{n} \in \set{Y}^{n}$, where
\begin{IEEEeqnarray}{rCl}
\beta_X &\triangleq& \frac{2 \sum_{x^{n} \in \set{X}^{n}} Q_{X^{n}}(x^{n})^{\rhotilde}}{\lfloor 2^{n \ratex} \rfloor - n \log |\set{X}| - 2}, \label{eq_dte_betax_def} \\
\beta_Y &\triangleq& \frac{2 \sum_{y^{n} \in \set{Y}^{n}} Q_{Y^{n}}(y^{n})^{\rhotilde}}{\lfloor 2^{n \ratey} \rfloor - n \log |\set{Y}| - 2}, \label{eq_dte_betay_def} \\[.2ex]
\lambda_X(x^{n}) &\triangleq& \begin{cases} \bigl\lceil \beta_X \betaqspace Q_{X^{n}}(x^{n})^{-\rhotilde} \hspace{.1em}\bigr\rceil & \text{if } Q_{X^{n}}(x^{n}) > 0, \label{eq_dte_lambdax_def} \\
\hfil \infty & \text{if } Q_{X^{n}}(x^{n}) = 0, \end{cases} \IEEEeqnarraynumspace \\[.2ex]
\lambda_Y(y^{n}) &\triangleq& \begin{cases} \bigl\lceil \beta_Y \betaqspace Q_{Y^{n}}(y^{n})^{-\rhotilde} \hspace{.1em}\bigr\rceil & \text{if } Q_{Y^{n}}(y^{n}) > 0, \label{eq_dte_lambday_def} \\
\hfil \infty & \text{if } Q_{Y^{n}}(y^{n}) = 0, \end{cases}
\end{IEEEeqnarray}
which are well-defined thanks to (\ref{eq_dte_n_big_enough_a}) and (\ref{eq_dte_n_big_enough_b}).
The condition of \cite[Proposition~III.2]{BasicTaskEncoding} on $\lambda_X$ is fulfilled with $\alpha = 2$ because
\begin{IEEEeqnarray}{rCl}
\IEEEeqnarraymulticol{3}{l}{2 \sum_{x^{n} \in \set{X}^{n}} \frac{1}{\lambda_X(x^{n})} + \log |\set{X}^{n}| + 2} \nonumber \\*[-.4ex] \quad
&\le& 2 \sum_{x^{n} \in \set{X}^{n}} \frac{Q_{X^{n}}(x^{n})^{\rhotilde}}{\beta_X} + n \log |\set{X}| + 2 \IEEEeqnarraynumspace \label{eq_dte_mu_a} \\
&=& \lfloor 2^{n \ratex} \rfloor, \label{eq_dte_mu_b}
\end{IEEEeqnarray}
where (\ref{eq_dte_mu_a}) follows from (\ref{eq_dte_lambdax_def}) and (\ref{eq_dte_mu_b}) follows from (\ref{eq_dte_betax_def}).
The same arguments show that the respective condition on $\lambda_Y$ is also fulfilled.
We upperbound the $\rho$-th moment of the list size as follows (we neglect the issue that $\lambda_X(x^{n})$ and $\lambda_Y(y^{n})$ can be infinite, but it is possible to show that (\ref{eq_dte_rho_moment_e}) continues to hold without this simplification):
\begin{IEEEeqnarray}{rCl}
\IEEEeqnarraymulticol{3}{l}{\expectation{L(X^{n},Y^{n})^{\rho}}} \nonumber \\*[.2ex] \quad \!\!\!
&=& \sum_{x^{n}, y^{n}} P(x^{n}, y^{n}) L_X(x^{n})^{\rho} L_Y(y^{n})^{\rho} \label{eq_dte_rho_moment_a} \\
&\le& \sum_{x^{n}, y^{n}} P(x^{n}, y^{n}) \lambda_X(x^{n})^{\rho} \lambda_Y(y^{n})^{\rho} \label{eq_dte_rho_moment_b} \\
&=& \sum_{x^{n}, y^{n}} P(x^{n}, y^{n}) \left\lceil \frac{\beta_X}{Q_{X^{n}}(x^{n})^{\rhotilde}} \right\rceil^{\rho} \left\lceil \frac{\beta_Y}{Q_{Y^{n}}(y^{n})^{\rhotilde}} \right\rceil^{\rho} \label{eq_dte_rho_moment_c} \\
&\le& \sum_{x^{n}, y^{n}} P(x^{n}, y^{n}) \Biggl\{1 + \frac{2^{\rho} \beta_X^{\rho}}{Q_{X^{n}}(x^{n})^{\rho \rhotilde}} + \frac{2^{\rho} \beta_Y^{\rho}}{Q_{Y^{n}}(y^{n})^{\rho \rhotilde}} \quad \nonumber \\*[-.2ex]
&& \hspace{9.7em} +\> \frac{4^{\rho} \left[\beta_X \beta_Y\right]^{\rho}}{\left[Q_{X^{n}}(x^{n}) Q_{Y^{n}}(y^{n})\right]^{\rho \rhotilde}} \Biggr\} \label{eq_dte_rho_moment_d} \\[.4ex]
&=& 1 + 2^{\rho [H_{\rhotilde}(X^{n}) + \Delta_{\rhotilde}(P_{X^{n}}||Q_{X^{n}}) - n \ratex + \delta_1]} \nonumber \\*
&& +\>2^{\rho [H_{\rhotilde}(Y^{n}) + \Delta_{\rhotilde}(P_{Y^{n}}||Q_{Y^{n}}) - n \ratey + \delta_2]} \nonumber \\*
&& +\>2^{\rho [H_{\rhotilde}(X^{n},Y^{n}) + \Delta_{\rhotilde}(P_{X^{n} Y^{n}}||Q_{X^{n}} Q_{Y^{n}}) - n (\ratex + \ratey) + \delta_3]}, \IEEEeqnarraynumspace \label{eq_dte_rho_moment_e}
\end{IEEEeqnarray}
where (\ref{eq_dte_rho_moment_a}) follows from (\ref{eq_dte_lxy_lx_ly});
(\ref{eq_dte_rho_moment_b}) follows from (\ref{eq_dte_lx_le_lambdax}) and (\ref{eq_dte_ly_le_lambday});
(\ref{eq_dte_rho_moment_c}) follows from (\ref{eq_dte_lambdax_def}) and (\ref{eq_dte_lambday_def});
(\ref{eq_dte_rho_moment_d}) follows from the inequality $\lceil \xi \rceil^{\rho} < 1 + 2^{\rho} \xi^{\rho}$ from \cite[(26)]{BasicTaskEncoding}, which holds for all $\rho > 0$ and $\xi \ge 0$; and
(\ref{eq_dte_rho_moment_e}) follows from (\ref{eq_def_delta_rhotilde}), a longer computation, and the definitions
\begin{IEEEeqnarray}{rCl}
\delta_1 &\triangleq& \log \frac{4 \cdot 2^{n \ratex}}{\lfloor 2^{n \ratex} \rfloor - n \log |\set{X}| - 2}, \IEEEeqnarraynumspace \\
\delta_2 &\triangleq& \log \frac{4 \cdot 2^{n \ratey}}{\lfloor 2^{n \ratey} \rfloor - n \log |\set{Y}| - 2}, \\
\delta_3 &\triangleq& \delta_1 + \delta_2.
\end{IEEEeqnarray}
From (\ref{eq_def_delta_to_d}) we know that
\begin{IEEEeqnarray}{rCl}
\Delta_{\rhotilde}(P_{X^{n}}||Q_{X^{n}}) &=& D_{1+\rho}(\hat{P}_{X^{n}}||\hat{Q}_{X^{n}}) \label{eq_dte_delta_d_x} \\
\Delta_{\rhotilde}(P_{Y^{n}}||Q_{Y^{n}}) &=& D_{1+\rho}(\hat{P}_{Y^{n}}||\hat{Q}_{Y^{n}}) \label{eq_dte_delta_d_y} \\
\Delta_{\rhotilde}(P_{X^{n} Y^{n}}||Q_{X^{n}} Q_{Y^{n}}) &=& D_{1+\rho}(\hat{P}_{X^{n} Y^{n}}||\hat{Q}_{X^{n}} \hat{Q}_{Y^{n}}), \IEEEeqnarraynumspace \label{eq_dte_delta_d_xy}
\end{IEEEeqnarray}
where (\ref{eq_dte_delta_d_xy}) holds because the transformation (\ref{eq_def_q_hat}) of a product is the product of the transformations.
Let $Q_{X^{n}}^{*}$ and $Q_{Y^{n}}^{*}$ be PMFs that achieve equality in (\ref{eq_def_k_alpha}), so
\begin{IEEEeqnarray}{C}
\Delta_{\rhotilde}(P_{X^{n} Y^{n}}||Q_{X^{n}}^{*} Q_{Y^{n}}^{*}) = K_{\rhotilde}(X^{n};Y^{n}). \IEEEeqnarraynumspace \label{eq_dte_def_qstar}
\end{IEEEeqnarray}
We now show how to choose $Q_{X^{n}}$ and $Q_{Y^{n}}$.
Even in the IID case, these will typically not be product distributions.
We consider the mixture distributions
\begin{IEEEeqnarray}{rCl}
\hat{Q}_{X^{n}}(x^{n}) &=& \tfrac{1}{2} \hat{P}_{X^{n}}(x^{n}) + \tfrac{1}{2} \hat{Q}_{X^{n}}^{*}(x^{n}), \IEEEeqnarraynumspace \label{eq_dte_qx_hat} \\
\hat{Q}_{Y^{n}}(y^{n}) &=& \tfrac{1}{2} \hat{P}_{Y^{n}}(y^{n}) + \tfrac{1}{2} \hat{Q}_{Y^{n}}^{*}(y^{n}), \label{eq_dte_qy_hat}
\end{IEEEeqnarray}
and use the inverse transformation (\ref{eq_def_q_hat_inverse}) to obtain $Q_{X^{n}}$ and $Q_{Y^{n}}$.
Consequently,
\begin{IEEEeqnarray}{rCl}
\IEEEeqnarraymulticol{3}{l}{\Delta_{\rhotilde}(P_{X^{n}}||Q_{X^{n}})} \nonumber \\*[.2ex] \quad
&=& D_{1+\rho}(\hat{P}_{X^{n}}||\tfrac{1}{2} \hat{P}_{X^{n}} + \tfrac{1}{2} \hat{Q}_{X^{n}}^{*}) \label{eq_dte_x_div_bound_a} \\
&=& \frac{1}{\rho} \log \sum_{x^{n}} \hat{P}_{X^{n}}(x^{n})^{1+\rho} \left[\frac{\hat{P}_{X^{n}}(x^{n}) + \hat{Q}_{X^{n}}^{*}(x^{n})}{2}\right]^{-\rho} \IEEEeqnarraynumspace \label{eq_dte_x_div_bound_b} \\
&\le& \frac{1}{\rho} \log \sum_{x^{n}} \hat{P}_{X^{n}}(x^{n})^{1+\rho} \left[\frac{1}{2} \hat{P}_{X^{n}}(x^{n})\right]^{-\rho} \label{eq_dte_x_div_bound_c} \\
&=& 1, \label{eq_dte_x_div_bound_d}
\end{IEEEeqnarray}
where (\ref{eq_dte_x_div_bound_a}) follows from (\ref{eq_dte_delta_d_x}) and (\ref{eq_dte_qx_hat});
(\ref{eq_dte_x_div_bound_b}) follows from the definition (\ref{eq_def_renyi_div}); and
(\ref{eq_dte_x_div_bound_d}) holds because $\hat{P}_{X^{n}}$ is a PMF.
In the same way, we obtain $\Delta_{\rhotilde}(P_{Y^{n}}||Q_{Y^{n}}) \le 1$ and
\begin{IEEEeqnarray}{rCl}
\IEEEeqnarraymulticol{3}{l}{\Delta_{\rhotilde}(P_{X^{n} Y^{n}}||Q_{X^{n}} Q_{Y^{n}})} \nonumber \\*[.2ex] \quad
&=& D_{1+\rho}\bigl(\hat{P}_{X^{n} Y^{n}}||(\tfrac{1}{2} \hat{P}_{X^{n}} + \tfrac{1}{2} \hat{Q}_{X^{n}}^{*}) (\tfrac{1}{2} \hat{P}_{Y^{n}} + \tfrac{1}{2} \hat{Q}_{Y^{n}}^{*})\bigr) \IEEEeqnarraynumspace \label{eq_dte_xy_div_bound_a} \\
&\le& \frac{1}{\rho} \log \sum_{x^{n}, y^{n}} \hat{P}(x^{n}, y^{n})^{1+\rho} \left[\frac{\hat{Q}_{X^{n}}^{*}(x^{n}) \hat{Q}_{Y^{n}}^{*}(y^{n})}{4}\right]^{-\rho} \label{eq_dte_xy_div_bound_b} \\
&=& D_{1+\rho}(\hat{P}_{X^{n} Y^{n}}||\hat{Q}_{X^{n}}^{*} \hat{Q}_{Y^{n}}^{*}) + 2 \label{eq_dte_xy_div_bound_c} \\
&=& K_{\rhotilde}(X^{n};Y^{n}) + 2, \label{eq_dte_xy_div_bound_d}
\end{IEEEeqnarray}
where (\ref{eq_dte_xy_div_bound_d}) follows from (\ref{eq_def_delta_to_d}) and (\ref{eq_dte_def_qstar}).
Plugging these results into (\ref{eq_dte_rho_moment_e}), we finally arrive at
\begin{IEEEeqnarray}{rCl}
\IEEEeqnarraymulticol{3}{l}{\expectation{L(X^{n},Y^{n})^{\rho}}} \nonumber \\* \quad \!\!\!
&\le& 1 + 2^{\rho n [\frac{1}{n} H_{\rhotilde}(X^{n}) - \ratex]} \cdot 2^{\rho (\delta_1 + 1)} \nonumber \\*
&& +\>2^{\rho n [\frac{1}{n} H_{\rhotilde}(Y^{n}) - \ratey]} \cdot 2^{\rho (\delta_2 + 1)} \nonumber \\*
&& +\>2^{\rho n [\frac{1}{n} H_{\rhotilde}(X^{n},Y^{n}) + \frac{1}{n} K_{\rhotilde}(X^{n};Y^{n}) - (\ratex + \ratey)]} \cdot 2^{\rho (\delta_3 + 2)}, \IEEEeqnarraynumspace
\end{IEEEeqnarray}
which tends to one as $n$ tends to infinity:
since $(\ratex, \ratey)$ is in the interior of (\ref{eq_dte_rateregion_a})--(\ref{eq_dte_rateregion_c}), the expressions in square brackets will be smaller than or equal to $\gamma$ for some $\gamma < 0$ and $n$ large enough; and
we have $\lim_{n \to \infty} \delta_1 = \lim_{n \to \infty} \delta_2 = 2$ and also $\lim_{n \to \infty} \delta_3 = 4$.

We finish with the specialization of the region (\ref{eq_dte_rateregion_a})--(\ref{eq_dte_rateregion_c}) for an IID source with PMF $P_{XY}$.
In this case, (\ref{eq_dte_iid_rateregion_a})--(\ref{eq_dte_iid_rateregion_c}) readily follow from (\ref{eq_dte_rateregion_a})--(\ref{eq_dte_rateregion_c}) because
\begin{IEEEeqnarray}{rCl}
H_{\rhotilde}(X^{n}) &=& n H_{\rhotilde}(P_X), \label{eq_dte_iid_sl_a} \\
H_{\rhotilde}(Y^{n}) &=& n H_{\rhotilde}(P_Y), \label{eq_dte_iid_sl_b} \\
H_{\rhotilde}(X^{n}, Y^{n}) &=& n H_{\rhotilde}(P_{XY}), \label{eq_dte_iid_sl_c} \\
K_{\rhotilde}(X^{n};Y^{n}) &=& n K_{\rhotilde}(X;Y), \IEEEeqnarraynumspace \label{eq_dte_iid_sl_d}
\end{IEEEeqnarray}
where (\ref{eq_dte_iid_sl_a})--(\ref{eq_dte_iid_sl_c}) follow from the definition (\ref{eq_def_renyi_entropy}) and simple computations; and
(\ref{eq_dte_iid_sl_d}) follows from the repeated application of \cite[Theorem~2, Property~3]{TwoMeasuresOfDependence}.
\end{proof}

\section{On the Definition of the List}
\label{sec_discussion}

To appreciate the subtleties in defining the list, let us first consider the single-source case and compare (\ref{eq_intro_listsize}) with the case where the decoder's list is only required to contain tasks whose probability, conditional on the observed label, is positive.
The list size in this case is
\begin{IEEEeqnarray}{C}
L'(x) \triangleq |\{x' \in \set{X}: P(x') > 0 \> \land \> f(x') = f(x)\}|. \IEEEeqnarraynumspace \label{eq_posterior_listsize}
\end{IEEEeqnarray}

In the single-source case, the two criteria lead to identical asymptotics because for every task encoder $f$ whose $\rho$-th moment of the list size according to (\ref{eq_posterior_listsize}) is $\expectation{L_f'(X)^{\rho}}$,
there exists a task encoder $g$ that has the same $\rho$-th moment of the list size according to (\ref{eq_intro_listsize}) if $g$ is allowed to use one additional label (which is negligible in an asymptotic setting).
Indeed, if
\begin{IEEEeqnarray}{C}
g(x) = \begin{cases} \hfil f(x) & \text{if } P(x) > 0, \\
\numberoflabels + 1 & \text{if } P(x) = 0, \end{cases} \IEEEeqnarraynumspace \label{eq_posterior_def_g}
\end{IEEEeqnarray}
where $\numberoflabels + 1$ denotes the additional label, then
\begin{IEEEeqnarray}{rCl}
\expectation{L_g(X)^{\rho}} &=& \sum_{x: P(x) > 0} P(x) L_g(x)^{\rho} \label{eq_posterior_gef_a} \\
&=& \sum_{x: P(x) > 0} P(x) L_f'(x)^{\rho} \IEEEeqnarraynumspace \label{eq_posterior_gef_b} \\
&=& \expectation{L_f'(X)^{\rho}}, \label{eq_posterior_gef_c}
\end{IEEEeqnarray}
where (\ref{eq_posterior_gef_b}) follows from (\ref{eq_posterior_def_g}) since tasks with $P(x) > 0$ do not share their labels with zero-probability tasks, so $L_g(x)$ is equal to $L_f'(x)$ for all $x \in \set{X}$ with $P(x) > 0$.

In the distributed case, the picture can change dramatically.
To see why, consider an IID source with $X = Y$:
under the positive posterior probability criterion, the decoder's list will only contain pairs that satisfy $x^{n} = y^{n}$, and a careful analysis shows that rate pairs $(\ratex, \ratey)$ satisfying
\begin{IEEEeqnarray}{C}
\ratex + \ratey \ge H_{\rhotilde}(P_{XY}) \IEEEeqnarraynumspace \label{eq_posterior_alternative_region}
\end{IEEEeqnarray}
with strict inequality are achievable, while those not satisfying (\ref{eq_posterior_alternative_region}) are not.
Unless $X$ and $Y$ are deterministic, this region is strictly larger than the region defined by (\ref{eq_dte_iid_rateregion_a})--(\ref{eq_dte_iid_rateregion_c}):
there are no individual constraints on $\ratex$ and $\ratey$, and the constraint on the sum rate does not include the penalty term $K_{\rhotilde}$.

The definition based on (\ref{eq_intro_listsize}) seems easier to analyze and, unless zero-probability tasks are present, the two criteria are equivalent.

\section{Distributed Guessing}
\label{sec_dist_guessing}

As in distributed task encoding, a source generates a sequence of pairs $\{(X_i,Y_i)\}_{i=1}^{n}$ over the finite alphabet $\set{X} \times \set{Y}$.
Using the functions $f_n$ and $g_n$, the sequence $X^{n}$ is described by one of $\lfloor 2^{n \ratex} \rfloor$ labels and the sequence $Y^{n}$ by one of $\lfloor 2^{n \ratey} \rfloor$ labels.
Given a pair of labels, the decoder repeatedly guesses $(x^{n},y^{n})$ until correct.
We are interested in the number of guesses that the decoder needs.
For a fixed $\rho > 0$, a rate pair $(\ratex,\ratey)$ is called achievable if there exists a sequence of encoders $\{(f_n,g_n)\}_{n=1}^{\infty}$ such that the $\rho$-th moment of the number of guesses tends to one as $n$ tends to infinity, i.e., if
$\lim_{n \to \infty} \expectation{G(X^{n},Y^{n})^{\rho}} = 1$.

In the IID case, we show in \cite{DistributedGuessingAndTaskEncoding} that rate pairs $(\ratex,\ratey)$ in the interior of the following region are achievable, while those outside the region are not:
\begin{IEEEeqnarray}{rCl}
\ratex &\ge& H_{\rhotilde}(X|Y), \\
\ratey &\ge& H_{\rhotilde}(Y|X), \\
\ratex + \ratey &\ge& H_{\rhotilde}(X,Y), \IEEEeqnarraynumspace
\end{IEEEeqnarray}
where $H_{\rhotilde}(X|Y)$ is the conditional R\'enyi entropy from \cite{Arimoto}.

\end{document}